\renewcommand{\thefootnote}{\arabic{footnote}}
\author{\renewcommand{\thefootnote}{\arabic{footnote}} Yan Liu\footnotemark$\phantom{\;}^,$\footnotemark, Xiwang Cao\footnotemark}
\begin{document}

\newtheorem{theorem}{Theorem}[section]
\newtheorem{corollary}[theorem]{Corollary}
\newtheorem{definition}[theorem]{Definition}
\newtheorem{proposition}[theorem]{Proposition}
\newtheorem{lemma}[theorem]{Lemma}
\newtheorem{example}[theorem]{Example}
\newenvironment{proof}{\noindent {\bf Proof.}}{\rule{3mm}{3mm}\par\medskip}
\newcommand{\remark}{\medskip\par\noindent {\bf Remark.~~}}

\title{Optimal $p$-ary cyclic codes with two zeros}
\renewcommand{\thefootnote}{\arabic{footnote}}
\footnotetext[1]{College of Mathematics and Physics,  Yancheng Institute of Technology, Yancheng, 224003, China,  liuyan0916@126.com. Y. Liu  is supported by the Foundation of Yancheng Institute of Technology (No. XJ201746).}
\footnotetext[2]{ School of Mathematical Sciences, Nanjing University of Aeronautics and Astronautics, Nanjing, 210016, China.}
\renewcommand{\thefootnote}{\arabic{footnote}} \footnotetext[3]{Corresponding author, School of Mathematical Sciences, Nanjing University of Aeronautics and Astronautics, Nanjing, 210016, China,  xwcao@nuaa.edu.cn.  X. Cao is supported by the National Natural Science Foundation of China (No. 11771007).}
\date{}

\maketitle
\thispagestyle{empty}

\abstract{ As a subclass of linear codes, cyclic codes have efficient encoding and decoding algorithms, so they are widely used in many areas such as  consumer electronics, data storage systems and communication systems.    In this paper, we give a general construction of optimal $p$-ary cyclic codes which leads to three explicit constructions. In addition, another  class of  $p$-ary optimal cyclic codes  are presented.}

\noindent {\bf Key words and phrases:} cyclic code, optimal code, Sphere packing bound.

\noindent {\bf MSC:} 94B15, 11T71.

\section{Introduction}
Let $p$ be a  prime. Denote by  $\mathbb{F}_{p}$ the finite field with $p$ elements.  An $[n,l,d]$ linear code $\mathcal{C}$ over $\mathbb{F}_{p}$ is a linear subspace of $\mathbb{F}_{p}^{n}$ with dimension $l$ and minimum Hamming distance $d$.
Moreover,  $\mathcal{C}$ is called cyclic if   any $(c_{0}, c_{1},  \ldots, c_{n-1}) \in \mathcal{C}$ implies $(c_{n-1}, c_{0},  \ldots, c_{n-2}) \in \mathcal{C}$.   
It is well known that each codeword $(c_{0}, c_{1},  \ldots, c_{n-1}) \in \mathbb{F}_{p}^{n}$  can be regarded as  a polynomial $c_{0}+ c_{1}x+\cdots+c_{n-1}x^{n-1} \in \mathbb{F}_{p}[x]/(x^{n}-1)$.
Then a linear code $\mathcal{C}$ in $\mathbb{F}_{p}^{n}$ is cyclic if and only if $\mathcal{C}$ is an ideal of the polynomial residue class ring $\mathbb{F}_{p}[x]/(x^{n}-1)$.  Since each ideal of the ring $\mathbb{F}_{p}[x]/(x^{n}-1)$ is  principal, every cyclic code corresponds to a principal ideal $(g(x))$ of  the multiples of a polynomial $g(x)$ which is the monic polynomial of lowest degree in the ideal.   $g(x)$ is called the generator polynomial,   $h(x)=(x^{n}-1)/g(x)$ is called the parity-check polynomial of the code $\mathcal{C}$. If $g(x)$ can be reduced to a product of $r$ different irreducible polynomials in $\mathbb{F}_{p}[x]$, then $\mathcal{C}$ is described to have $r$ zeros.

Cyclic codes  have   wide  practical applications in many areas   as they have efficient encoding and decoding algorithms.
Moreover, they also have wide applications in cryptography and sequence design. So in the past few decades,  much progress has been made on cyclic codes.
It is worth mentioning that in recent years, many scholars are interested in studying optimal cyclic codes over finite fields according to  Sphere packing bound \cite{11}.  For example, C. Carlet et al. \cite{4}  constructed optimal ternary cyclic codes with minimum distance 4  by using perfect nonlinear monomials. In 2013, C. Ding and T. Helleseth \cite{6}
obtained some optimal ternary cyclic codes by employing almost perfect nonlinear monomials
and some other monomials over $\mathbb{F}_{p^{m}}$. They also presented nine open problems. After that,
two of the open problems were solved by N. Li et al.  \cite{32,23}. In \cite{32}, the authors also presented several classes of optimal cyclic codes with parametes $[3^{m}-1, 3^{m}-1-2m, 4]$ and $[3^{m}-1, 3^{m}-2-2m, 5]$.  In 2014, Z. Zhou and
C. Ding \cite{19} gave a class of optimal ternary cyclic codes.  C. Fan et al.\cite{8}  obtained a new class of optimal ternary cyclic codes with minimum
distance four.  Furthermore, they also discussed the weight of   duals of them.  What's more, L. Wang and G. Wu \cite{35} listed four classes of optimal ternary cyclic codes. Afterwards,  H. Yan et al. \cite{3}   also obtained a new class of optimal ternary cyclic codes and discussed the weight of their duals. Different from these work, G. Xu et al. \cite{41} constructed optimal $p$-ary  cyclic codes by making use of monomilas. More generally, C. Ding and S. Ling \cite{5} proposed a $q$-polynomial method for the construction of cyclic codes.   Recently,   W. Fang et al. \cite{9} used $q$-polynomials to construct a class of $[2(q^{m}-1)/(q-1), 2(q^{m}-1)/(q-1), 4]$ constacyclic codes which are optimal. In addition, Y. Zhou et al. \cite{46} constructed several classes of optimal negacyclic codes $[(5^{m}-1)/2, (5^{m}-1)/2, 4]$ over $\mathbb{F}_{5}$. For information on the related topics, the reader is referred to \cite{51,1,15,7,10,31,22,13,43,14,45} and the references therein.


The rest of this paper is  organized as follows. Some preliminaries  will be introduced in Section 2.  Optimal $p$-ary cyclic codes are discussed in Section 3. Section 4 concludes this paper.

\section{Preliminaries}
In this section, we will fix some basic notation for this paper and introduce  $p$-cyclotomic cosets that will be used in   subsequent sections.
Throughout this paper, we will use the following notation unless otherwise stated.
\begin{itemize}
  \item $ p$ is an odd prime.
  \item  $n=2(p^{m}-1)/(p-1)$, where $m$ is a positive integer.
   \item $\pi$ is a primitive element of  $\mathbb{F}_{p^{m}}$.
    \item $\sigma \in \mathbb{F}_{p^{m}}$ is  a primitive $n$-th root of unity.
    \item   $m_{i}(x)$ is the minimal polynomial of $\sigma^{i}$ over $\mathbb{F}_{p}$.
  \item  $\mathbb{Z}_{n}=\{ 0,1,2,\ldots,n-1\}$ associated with the integer addition modulo $n$ and integer multiplication modulo $n$ operations.
  \item  $\Pi=\{\sigma, \sigma^{2}, \cdots,  \sigma^{n-1}\}$.
\end{itemize}
For any integer $j$, $0 \leq j \leq n-1$, the $p$-cyclotomic coset modulo $n$ containing $j$ is defined by
\[C_{j}=\{j, pj, p^{2}j,\cdots,p^{l_{j}-1}j\}\subset \mathbb{Z}_{n}\]
where $l_{j}$ is the minimal positive integer such that $p^{l_{j}}j\equiv j  \pmod n$, and is called the \emph{length} of $C_{j}$  which is denoted by $|C_{j}|$. The smallest integer in  $C_{j}$ is called the  \emph{coset leader} of $C_{j}$. Let $\Gamma$ be the set of all coset leaders. By definition, we have $\bigcup_{j \in \Gamma} C_{j}= \mathbb{Z}_{n}$.
In the following, we give a lemma about the length of cyclotomic cosets.
\begin{lemma}[\cite{52}]\label{Le:2.4}
For any   integer $j$, $0\leq j \leq n-1$ with $\gcd(j,n)=d$, the length of $C_{j}$ is equal to $m$ if $1\leq d \leq 2(p+1)$.
\end{lemma}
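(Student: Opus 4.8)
The plan is to translate the combinatorial quantity $|C_j|$ into a pure statement about multiplicative orders and then rule out every proper divisor of $m$ by a size estimate. First I would record that $\gcd(p,n)=1$: since $p$ is odd we have $\gcd(p,2)=1$, and $(p^m-1)/(p-1)=1+p+\cdots+p^{m-1}\equiv 1\pmod p$, so $p$ is coprime to $n=2(p^m-1)/(p-1)$. Writing $j=dj_{1}$ and $n=dn_{1}$ with $\gcd(j_{1},n_{1})=1$, the defining congruence $p^{\,l}j\equiv j\pmod n$ is equivalent to $n_{1}\mid (p^{\,l}-1)j_{1}$, and hence, by coprimality of $j_1$ and $n_1$, to $n_{1}\mid p^{\,l}-1$. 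Therefore $|C_j|=l_j$ is exactly the multiplicative order of $p$ modulo $n/d$.

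Next I would show that $l_j\mid m$. Because $p$ is odd, $2\mid p-1$, so $n=2(p^m-1)/(p-1)$ divides $p^m-1=(p-1)\cdot\frac{p^m-1}{p-1}$; thus $p^m\equiv 1\pmod n$ and a fortiori $p^m\equiv 1\pmod{n/d}$, which gives $l_j\mid m$. It then remains only to exclude the possibility that $l_j=e$ for some proper divisor $e$ of $m$. Any such $e$ satisfies $e\le\lfloor m/2\rfloor$, and $l_j=e$ would force $n/d\mid p^{e}-1$, in particular $n/d\le p^{e}-1\le p^{\lfloor m/2\rfloor}-1$.

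The heart of the argument, and the step I expect to be the main obstacle, is the matching lower bound on $n/d$. Using the hypothesis $d\le 2(p+1)$ I would estimate
\[
\frac{n}{d}\ \ge\ \frac{n}{2(p+1)}\ =\ \frac{p^m-1}{p^2-1},
\]
and then verify the strict inequality $\frac{p^m-1}{p^2-1}>p^{\lfloor m/2\rfloor}-1$, equivalently $p^m-1>(p^2-1)\bigl(p^{\lfloor m/2\rfloor}-1\bigr)$. Since $(p^2-1)(p^{\lfloor m/2\rfloor}-1)<p^{\lfloor m/2\rfloor+2}$ and $m-\lfloor m/2\rfloor=\lceil m/2\rceil\ge 2$ for $m\ge 3$, the right-hand side is bounded by $p^{\lfloor m/2\rfloor+2}\le p^m$, which closes the estimate (a one-line check confirms the bound is strict even in the borderline cases $m=3,4$). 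Combining the two inequalities yields $0<p^{e}-1<n/d$, contradicting $n/d\mid p^{e}-1$; hence no proper divisor $e$ can occur and $l_j=m$.

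The delicate points I would watch are exactly the places where the constant $2(p+1)$ is consumed: the lower bound on $n/d$ is tightest for the largest admissible $d$, and the exponent estimate is tightest for the largest proper divisor, namely $e=m/2$ when $m$ is even, which is precisely the regime governed by the factorization $p^m-1=(p^{m/2}-1)(p^{m/2}+1)$. Because the inequality only just holds from $m\ge 3$ onward, I would handle the very small cases separately by direct inspection: for $m=1$ the claim is immediate, since $1$ has no proper divisor, and the residual case $m=2$ I would check by hand rather than through the generic bound.
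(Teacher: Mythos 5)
The paper itself offers no proof of this lemma: it is quoted verbatim from \cite{52}, so there is no internal argument to compare yours against. Judged on its own merits, your proof is sound exactly where the statement is true. The reduction of $|C_j|$ to the multiplicative order of $p$ modulo $n/d$ is correct (it uses $\gcd(p,n)=1$ and $\gcd(j_1,n_1)=1$ properly), the divisibility $l_j \mid m$ follows from $n \mid p^m-1$ (valid because $2 \mid p-1$), and the estimate $\frac{p^m-1}{p^2-1} > p^{\lfloor m/2\rfloor}-1$, which holds for $m \ge 3$, does eliminate every proper divisor of $m$; the constant $2(p+1)$ is consumed precisely where you say it is. This is the standard order-theoretic argument one would expect behind a lemma of this type.

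The one point you deferred, however, cannot be repaired: for $m=2$ the lemma as literally stated is \emph{false}, so your planned hand-check would refute the case rather than close it. Take $p=3$, $m=2$: then $n = 2(3^2-1)/(3-1) = 8$, and $j=4$ has $d=\gcd(4,8)=4 \le 2(p+1)=8$, yet $C_4=\{4\}$ has length $1 \ne m$. (Even more simply, $j=0$ gives $d=n=2(p+1)$ and $|C_0|=1$.) The lemma implicitly assumes $m \ge 3$ --- consistent with the fact that this paper only ever invokes it with $m>2$ --- and your argument is complete precisely in that regime, together with the trivial case $m=1$. So the proof is correct for the intended scope, but you should state the restriction $m \ge 3$ explicitly rather than leave $m=2$ as a case ``to be checked by hand,'' since no such check can succeed.
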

Let $\mathcal{C}_{p}(u,v)$ be the cyclic code of length $n$ over  $\mathbb{F}_{p}$ with generator polynomial $m_{u}(x)m_{v}(x)$ where $u$, $v$ are in $\mathbb{Z}_{n}$ and they are not in the same cyclotomic coset. The following two lemmas  are important to prove the main results of this paper.
\begin{lemma} [\cite{52}]\label{Le:2.1}
The minimum distance of $\mathcal{C}_{p}(u,v)$ is no less than $3$ if $gcd(v-u, n)=1$.
\end{lemma}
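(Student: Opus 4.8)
The plan is to prove the statement by showing directly that $\mathcal{C}_{p}(u,v)$ contains no nonzero codeword of Hamming weight $1$ or $2$; since these are the only positive weights below $3$, this forces the minimum distance to be at least $3$. The structural fact I would exploit is that a polynomial $c(x)=\sum_{i=0}^{n-1}c_{i}x^{i}$ represents a codeword of $\mathcal{C}_{p}(u,v)$ precisely when the generator polynomial $m_{u}(x)m_{v}(x)$ divides it modulo $x^{n}-1$, which is equivalent to the two evaluation conditions $c(\sigma^{u})=0$ and $c(\sigma^{v})=0$. Thus every codeword vanishes at both $\sigma^{u}$ and $\sigma^{v}$, and I would test candidate weight-$1$ and weight-$2$ words against these two conditions.

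First I would rule out weight $1$. A weight-$1$ word has the form $c(x)=a x^{i}$ with $a\in\mathbb{F}_{p}^{*}$ and $0\le i\le n-1$; then $c(\sigma^{u})=a\sigma^{ui}$, and since $\sigma$ is an $n$-th root of unity it is nonzero, so $a\sigma^{ui}=0$ would force $a=0$, a contradiction. Hence no weight-$1$ codeword exists, and this uses only one of the two zeros.

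The substantive case is weight $2$. Here a candidate word is $c(x)=a x^{i}+b x^{j}$ with $a,b\in\mathbb{F}_{p}^{*}$ and $0\le i\neq j\le n-1$. Imposing the two conditions gives
\[
a\sigma^{ui}+b\sigma^{uj}=0,\qquad a\sigma^{vi}+b\sigma^{vj}=0.
\]
Since all powers of $\sigma$ are nonzero, each equation can be solved for the same ratio, yielding $\sigma^{u(i-j)}=-b/a=\sigma^{v(i-j)}$, and therefore $\sigma^{(u-v)(i-j)}=1$.

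Finally I would invoke the primitivity of $\sigma$: having order exactly $n$, the relation $\sigma^{(u-v)(i-j)}=1$ is equivalent to $n\mid (u-v)(i-j)$. This is exactly where the hypothesis $\gcd(v-u,n)=1$ enters, and it is the only delicate point of the argument. Because $u-v$ is then a unit modulo $n$, it may be cancelled to give $n\mid (i-j)$; but $0\le i\neq j\le n-1$ forces $0<|i-j|<n$, which is incompatible with $n\mid(i-j)$. This contradiction eliminates the weight-$2$ case, and combining the two cases yields that the minimum distance of $\mathcal{C}_{p}(u,v)$ is at least $3$. I do not expect any genuine obstacle beyond keeping track that $a,b$ are nonzero so that the ratios above are well defined; note in particular that the argument never uses $a,b\in\mathbb{F}_{p}$ rather than $\mathbb{F}_{p^{m}}$, since the contradiction is extracted purely from the order of $\sigma$ and the coprimality condition.
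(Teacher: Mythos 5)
Your proof is correct and complete. Note that the paper does not actually prove this lemma itself---it is quoted from \cite{52}---but your argument is the standard (and essentially the only natural) one: membership in $\mathcal{C}_{p}(u,v)$ is equivalent to vanishing at $\sigma^{u}$ and $\sigma^{v}$ (valid because $u$ and $v$ lie in distinct cyclotomic cosets, so $m_{u}(x)m_{v}(x)$ is a product of distinct irreducibles dividing $x^{n}-1$), weight-$1$ words die on one evaluation, and a weight-$2$ word forces $\sigma^{(u-v)(i-j)}=1$, which is impossible since $\gcd(v-u,n)=1$ and $0<|i-j|<n$. This is precisely the two-column-independence argument for the parity-check matrix with rows $(\sigma^{ui})_{i}$ and $(\sigma^{vi})_{i}$, so there is nothing to add.
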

\begin{lemma} [\cite{52}]\label{Le:2.2}
Let $v \notin C_{1}$ and $|C_{v}|=m$. Then $\mathcal{C}_{p}(1,v)$ is optimal with parameters $[n, n-2m, 4]$ if the following conditions are satisfied:
\begin{enumerate}[$1)$]
  \item  $\gcd(v-1, n)=1$;
  \item $v\equiv 1 \pmod{\frac{p-1}{2}}$; and
  \item the equations $(x+\alpha)^{v}\pm(x^{v}+\alpha)=0$ have no solutions in $\Pi$ for any  $\alpha$ in $\mathbb{F}_{p}^{*}$.
  \end{enumerate}
\end{lemma}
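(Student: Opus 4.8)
The plan is to verify the three asserted parameters separately, the minimum distance being the real content. For the dimension, note that $\gcd(1,n)=1\le 2(p+1)$, so Lemma \ref{Le:2.4} gives $|C_1|=m$; by hypothesis $|C_v|=m$, and $C_1\ne C_v$ since $v\notin C_1$. Thus $m_1(x)$ and $m_v(x)$ are distinct irreducible factors each of degree $m$, their product $m_1(x)m_v(x)$ has degree $2m$, and $\mathcal C_p(1,v)$ has dimension $n-2m$.

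The heart of the argument is the bound $d\ge 4$. I would first apply Lemma \ref{Le:2.1} with condition $1)$ to discard nonzero codewords of weight $\le 2$, so it suffices to exclude weight $3$. Suppose $c(x)$ is such a word; since the code is cyclic I may shift a nonzero coordinate to position $0$ and write $c(x)=c_1x^{a}+c_2x^{b}+c_3$ with $c_1,c_2,c_3\in\mathbb F_p^{*}$ and $a\ne b$ nonzero modulo $n$. Set $X=\sigma^{a}\in\Pi$. Then $c(\sigma)=c(\sigma^{v})=0$ read $c_1X+c_2\sigma^{b}+c_3=0$ and $c_1X^{v}+c_2\sigma^{vb}+c_3=0$. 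Solving the first for $\sigma^{b}$ and substituting into the second (using $\sigma^{vb}=(\sigma^{b})^{v}$) eliminates $\sigma^{b}$ and leaves a single relation in $X$ and the scalars. Here condition $2)$ is decisive: $v\equiv 1\pmod{(p-1)/2}$ forces $\gamma^{v}=\pm\gamma$ for every $\gamma\in\mathbb F_p^{*}$, because $\gamma^{v-1}=\bigl(\gamma^{(p-1)/2}\bigr)^{2(v-1)/(p-1)}=\pm1$. Writing $\alpha=c_3/c_1\in\mathbb F_p^{*}$, so that $c_1X+c_3=c_1(X+\alpha)$, and pulling the scalars $-c_1$ and $c_2$ through the $v$-th power by this rule, the relation collapses to $(X+\alpha)^{v}=\pm(X^{v}+\alpha)$, i.e. $(X+\alpha)^{v}\mp(X^{v}+\alpha)=0$. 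Hence $X\in\Pi$ is a root of one of the equations in $3)$ for some $\alpha\in\mathbb F_p^{*}$, contradicting $3)$. Therefore no weight-$3$ word exists and $d\ge4$.

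It remains to pin $d=4$ and optimality, for which I would invoke the Sphere packing bound. A $p$-ary $[n,n-2m,d]$ code with $d\ge5$ corrects two errors, hence must satisfy $1+n(p-1)+\binom{n}{2}(p-1)^2\le p^{2m}$. Substituting $n=2(p^m-1)/(p-1)$ simplifies the left-hand side to $2p^{2m}-p^{m+1}-p^{m}+p$, which exceeds $p^{2m}$ for $m\ge2$; so no such code exists, forcing $d\le4$, and together with $d\ge4$ this gives $d=4$ and optimality. The step I expect to be the main obstacle is the elimination producing the clean equation of $3)$: one must track the two independent signs arising from $(-c_1)^{v}=\pm(-c_1)$ and $c_2^{v}=\pm c_2$ and verify they combine into the single $\pm$ of condition $3)$, and one must check that every division (by $c_1,c_2,c_3$) is legitimate, which is precisely where the weight being exactly $3$ and condition $1)$ (through Lemma \ref{Le:2.1}) are used.
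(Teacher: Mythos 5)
Your proof is correct, but note that the paper never proves this lemma itself: it is imported verbatim from \cite{52} (hence the citation attached to the statement), so there is no in-paper argument to compare yours against. Your reconstruction --- dimension $n-2m$ from the two distinct cyclotomic cosets of size $m$, minimum distance at least $3$ via Lemma \ref{Le:2.1} and condition $1)$, elimination of weight-$3$ codewords by using condition $2)$ to collapse the scalars $(-1)^{v}$, $c_{2}^{1-v}$, $c_{1}^{v-1}$ into a single sign and reach $(X+\alpha)^{v}\pm(X^{v}+\alpha)=0$ with $X\in\Pi$, then the Sphere packing bound to force $d=4$ and optimality --- is the standard argument for results of this type and every step checks out: in particular your count $1+n(p-1)+\binom{n}{2}(p-1)^{2}=2p^{2m}-p^{m+1}-p^{m}+p$ is right, and it exceeds $p^{2m}$ precisely when $m\geq 2$ since $p^{2m}-p^{m+1}-p^{m}+p=(p^{m}-p)(p^{m}-1)$.
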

\remark The conditions $1)$ and $2)$ implies that $v$ is even and $p \equiv 3 \pmod{4}$ since $n=2(p^{m}-1)/(p-1)$ is even.

\noindent In fact, the lemma above can be reduced to   the following lemma.
\begin{lemma}\label{Le:2.3}
Let $v \notin C_{1}$ and $|C_{v}|=m$. Then  $\mathcal{C}_{p}(1,v)$ is optimal with parameters $[n, n-2m, 4]$ if the following conditions are satisfied:
\begin{enumerate}[$1)$]
  \item  $\gcd(v-1, n)=1$;
  \item  $v\equiv 1 \pmod{\frac{p-1}{2}}$; and
  \item the equations $(x+\alpha)^{v}\pm(x^{v}+\alpha)=0$ have no solutions in $\Pi\setminus\{\sigma^{\frac{n}{2}}\}$, i.e., $\Pi\setminus\{-1\}$ for any  $\alpha$ in $\mathbb{F}_{p}^{*}$.
  \end{enumerate}
\end{lemma}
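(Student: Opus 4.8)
The plan is to deduce Lemma~\ref{Le:2.3} directly from Lemma~\ref{Le:2.2} by showing that, under conditions $1)$ and $2)$, the single point we are now permitted to discard, namely $\sigma^{n/2}$, can never be a solution of either equation in condition $3)$. Since conditions $1)$ and $2)$ are identical in the two statements, the only difference is the set over which condition $3)$ is tested; once $\sigma^{n/2}$ is ruled out automatically, the hypothesis ``no solutions in $\Pi\setminus\{\sigma^{n/2}\}$'' becomes equivalent to ``no solutions in $\Pi$'', and the optimality conclusion is inherited verbatim from Lemma~\ref{Le:2.2}.

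First I would record two elementary facts. Since $\sigma$ is a primitive $n$-th root of unity and $n$ is even, $(\sigma^{n/2})^{2}=1$ with $\sigma^{n/2}\neq 1$, so $\sigma^{n/2}=-1$; this justifies the identification $\Pi\setminus\{\sigma^{n/2}\}=\Pi\setminus\{-1\}$ in the statement. Next, as noted in the Remark, conditions $1)$ and $2)$ force $v$ to be even and $p\equiv 3\pmod 4$: condition $1)$ together with the evenness of $n$ shows $v-1$ is odd, while condition $2)$ gives $\frac{p-1}{2}\mid(v-1)$, so $\frac{p-1}{2}$ divides an odd number and is therefore itself odd, i.e. $p\equiv 3\pmod 4$.

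The crux is to evaluate $\beta^{v}$ for $\beta\in\mathbb{F}_{p}^{*}$. Writing $v=1+t\cdot\frac{p-1}{2}$ (possible by condition $2)$), I get $\beta^{v}=\beta\,\bigl(\beta^{(p-1)/2}\bigr)^{t}=\beta\,\eta(\beta)^{t}$, where $\eta(\beta)=\beta^{(p-1)/2}\in\{\pm 1\}$ is the quadratic character. Because $v-1=t\cdot\frac{p-1}{2}$ is odd and $\frac{p-1}{2}$ is odd, $t$ must be odd, whence $\beta^{v}=\beta\,\eta(\beta)$. Now I substitute $x=-1$ into the two equations; using $(-1)^{v}=1$ (as $v$ is even), both reduce to testing whether $(\alpha-1)^{v}=\pm(1+\alpha)$ for some $\alpha\in\mathbb{F}_{p}^{*}$. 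For $\alpha=1$ the left side is $0$ while the right side is $\pm 2\neq 0$. For $\alpha\neq 1$ the formula gives $(\alpha-1)^{v}=(\alpha-1)\eta(\alpha-1)=\pm(\alpha-1)$, and checking the four resulting equalities $\pm(\alpha-1)=\pm(1+\alpha)$ yields only the impossibilities $2=0$ or $2\alpha=0$. Hence $x=-1$ solves neither equation, so adding it back cannot create a solution, and Lemma~\ref{Le:2.2} applies.

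I expect the only real content to be the character computation $\beta^{v}=\beta\,\eta(\beta)$ together with the parity bookkeeping that pins down $t$ as odd; everything else is a direct substitution. The main point to handle with care is the degenerate value $\alpha=1$, where $\alpha-1=0$ and the character is undefined, which must be treated separately as above.
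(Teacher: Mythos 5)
Your proposal is correct and follows essentially the same route as the paper: both reduce Lemma~\ref{Le:2.3} to Lemma~\ref{Le:2.2} by substituting $x=-1$, using the parity facts ($v$ even, $\frac{p-1}{2}$ odd, hence $\beta^{v}=\pm\beta$ for $\beta\in\mathbb{F}_{p}^{*}$) to show $-1$ can never solve $(x+\alpha)^{v}\pm(x^{v}+\alpha)=0$. Your write-up is in fact slightly more careful than the paper's, which silently excludes $\alpha=1$ and absorbs the quadratic-character sign into a terse ``$(-1+\alpha_{0})\mp(1+\alpha_{0})=0$'' step that you spell out explicitly.
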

\begin{proof}
By Lemma \ref{Le:2.2}, we only need to prove that  $-1$ is not a solution of $(x+\alpha)^{v}\pm(x^{v}+\alpha)=0$ for any $\alpha$ in $\mathbb{F}_{p}^{*}$. Otherwise, there is an element $\alpha_{0} \in \mathbb{F}_{p}^{*}\backslash\{1\}$ such that \begin{equation}\label{Eq:2.1}(-1+\alpha_{0})^{v}\pm((-1)^{v}+\alpha_{0})=0.
\end{equation}
By the conditions  $1)$ and $2)$, suppose  $v=\frac{p-1}{2}k+1$, then $k$ is an odd integer. Note that $\alpha_{0}-1  \in \mathbb{F}_{p}^{*}$, then (\ref{Eq:2.1}) becomes $(-1+\alpha_{0})\mp(1+\alpha_{0})=0$ which is impossible.
\end{proof}
\section{New optimal $p$-ary codes with parameters $[n, n-2m, 4]$}
In this section, we will give two new classes of $p$-ary cyclic codes $\mathcal{C}_{p}(1,v)$ with parameters $[n, n-2m, 4]$ which are optimal according to  Sphere packing bound \cite{11}.

\noindent\textbf{3.1 The first class of optimal $p$-ary codes }

First, we consider the codes $\mathcal{C}_{p}(1,v)$ with $v=p^{k}+1$, where $k$ is an integer such that $0\leq k \leq m-1$.
\begin{theorem}
Let $m > 2$ be an odd integer. Let $v=p^{k}+1$ where $k$ is an integer such that $0\leq k \leq m-1$. Then $\mathcal{C}_{p}(1,v)$ is optimal with parameters $[n, n-2m, 4]$ if $\gcd(m, k)=\gcd(m, p-1)=1$ and $\frac{p-1}{2}\mid k$.
\end{theorem}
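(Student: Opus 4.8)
My plan is to read off the parameters and optimality by verifying, for $v=p^{k}+1$, the ingredients underlying Lemma~\ref{Le:2.3}, treating the one genuinely delicate point (the root $-1$) by hand. First I would settle the elementary data. Condition $1)$ is immediate, since $p\nmid n$ gives $\gcd(v-1,n)=\gcd(p^{k},n)=1$, and then Lemma~\ref{Le:2.1} already excludes codewords of weight $\le 2$. For the dimension I would show $|C_{1}|=|C_{v}|=m$ and $C_{1}\neq C_{v}$: $|C_{1}|=m$ by Lemma~\ref{Le:2.4}, while for $C_{v}$ I first get $\gcd(p^{k}+1,p^{m}-1)=2$ (any common divisor $d$ satisfies $p^{2k}\equiv1$ and $p^{m}\equiv1$, so $\mathrm{ord}_{d}(p)\mid\gcd(2k,m)=1$ by $m$ odd and $\gcd(k,m)=1$, forcing $d\mid p-1$ and then $d\mid 2$). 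Writing $n=2(p^{m}-1)/(p-1)$ and using $\gcd(m,p-1)=1$ (which makes $(p^{m}-1)/(p-1)\equiv m$ coprime to $p-1$, and odd since it is a sum of $m$ odd terms), this upgrades to $\gcd(v,n)=2\le 2(p+1)$, so $|C_{v}|=m$ by Lemma~\ref{Le:2.4}; a short check rules out $v\in C_{1}$. The same computation supplies the two facts I use throughout: $\gcd(n,p-1)=2$, so the only $n$-th roots of unity in $\mathbb{F}_{p}$ are $\pm1$, and $\mathbb{F}_{p^{2}}\cap\mathbb{F}_{p^{m}}=\mathbb{F}_{p}$ because $m$ is odd.

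The heart is condition $3)$: the equations $(x+\alpha)^{v}\pm(x^{v}+\alpha)=0$ have no root $x\in\Pi\setminus\{-1\}$ for $\alpha\in\mathbb{F}_{p}^{*}$. The key device is the Frobenius identity $(x+\alpha)^{v}=(x+\alpha)^{p^{k}}(x+\alpha)=(x^{p^{k}}+\alpha)(x+\alpha)$, valid since $\alpha^{p^{k}}=\alpha$. For the minus equation this collapses to the linearized relation $x^{p^{k}}+x=1-\alpha\in\mathbb{F}_{p}$. The additive map $x\mapsto x^{p^{k}}+x$ has trivial kernel on $\mathbb{F}_{p^{m}}$ (a kernel element satisfies $x^{p^{2k}}=x$, hence lies in $\mathbb{F}_{p^{\gcd(2k,m)}}=\mathbb{F}_{p}$, where the map is $x\mapsto 2x$), so it is a bijection sending $\mathbb{F}_{p}$ onto $\mathbb{F}_{p}$; thus any solution has $x\in\mathbb{F}_{p}$, whence $x\in\{\pm1\}$, giving no root in $\Pi\setminus\{-1\}$.

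The step I expect to be the main obstacle is the plus equation, which the same expansion turns into $2x^{p^{k}+1}+\alpha(x^{p^{k}}+x)+\alpha^{2}+\alpha=0$, i.e. $x^{p^{k}}=M(x)$ with the Möbius map $M(w)=-\alpha(w+\alpha+1)/(2w+\alpha)$. The trick I would use is that the matrix $\left(\begin{smallmatrix}-\alpha&-\alpha(\alpha+1)\\ 2&\alpha\end{smallmatrix}\right)$ representing $M$ has trace $0$, so $M$ is an involution in $\mathrm{PGL}_{2}(\mathbb{F}_{p})$ and, $m$ being odd, $M^{m}=M\neq\mathrm{id}$ (equality to the identity would need $2=0$). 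Since $M$ has coefficients in $\mathbb{F}_{p}$, applying the $p^{k}$-power map repeatedly gives $x^{p^{jk}}=M^{j}(x)$, and at $j=m$ one gets $x=M^{m}(x)=M(x)=x^{p^{k}}$, so $x\in\mathbb{F}_{p^{\gcd(k,m)}}=\mathbb{F}_{p}$ and again $x\in\{\pm1\}$ (the degenerate case $2x+\alpha=0$ forces $x=-\alpha/2\in\mathbb{F}_{p}$ as well). Hence neither equation has a solution in $\Pi\setminus\{-1\}$.

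It remains to dispose of the excluded point $x=-1$, which is where I would argue directly at the level of codewords rather than through condition $2)$. A weight-$3$ codeword normalizes (by a cyclic shift and an $\mathbb{F}_{p}^{*}$-scaling) to $aX+bY+1=0$ and $aX^{v}+bY^{v}+1=0$ with $a,b\in\mathbb{F}_{p}^{*}$ and distinct $X,Y\in\Pi$; if tracing back the reduction produced $-1$, one of $X,Y$ would equal $-1$, say $Y=-1$, and then $aX=b-1\in\mathbb{F}_{p}$ forces $X\in\mathbb{F}_{p}\cap\Pi\subseteq\{\pm1\}$, contradicting $X\notin\{1,-1\}$. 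Thus the value $-1$ is never realized by a genuine weight-$3$ word, so excluding it is legitimate. With weight-$3$ codewords ruled out, $\mathcal{C}_{p}(1,v)$ has minimum distance at least $4$, and optimality with parameters $[n,n-2m,4]$ follows from the sphere packing bound exactly as in Lemma~\ref{Le:2.3}. (The remaining hypothesis $\tfrac{p-1}{2}\mid k$ is the ingredient that the template of Lemma~\ref{Le:2.3} attaches to the point $-1$ through condition $2)$; in the argument above that point is instead dealt with by the degeneracy computation, and the divisibility is subsumed by the gcd facts of the first paragraph.)
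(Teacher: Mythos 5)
Your verification of condition $3)$ of Lemma \ref{Le:2.3} is correct in substance but takes a genuinely different route from the paper's. The paper removes the $\pm$ sign wholesale by raising $(x_{0}+\alpha)^{v}=\pm(x_{0}^{v}+\alpha)$ to the $(p^{k}-1)$-th power (an even exponent), collapsing it to $x_{0}(x_{0}^{p^{2k}-1}-1)(x_{0}^{p^{k}}-1)=0$ and finishing with $\gcd(p^{k},n)=1$ and $\gcd(p^{2k}-1,n)=2$. You instead split by sign: the minus case becomes the linearized equation $x^{p^{k}}+x=1-\alpha$ (kernel argument), and the plus case becomes $x^{p^{k}}=M(x)$ with $M$ a trace-zero M\"obius map, hence an involution, so that $m$ odd forces $x\in\mathbb{F}_{p}$. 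Both routes work, but yours has one repairable hole: the involution claim needs $M$ to be nondegenerate, and its matrix has determinant $\alpha(\alpha+2)$, so the case $\alpha=-2$ escapes your argument entirely; this is not a fringe case, since for $p=3$ it is $\alpha=1$, i.e.\ half of $\mathbb{F}_{3}^{*}$. (There the plus equation factors as $2(x-1)(x^{p^{k}}-1)=0$, so the repair is one line.)

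The genuine gap is your final paragraph. You drop the hypothesis $\tfrac{p-1}{2}\mid k$, and with it condition $2)$ of Lemma \ref{Le:2.3}, on the grounds that its only role is to exclude the point $-1$, which you exclude at the codeword level instead. But that is not its role. Lemma \ref{Le:2.2} is imported from \cite{52} as a black box, and condition $2)$ is what powers its internal reduction from a weight-$3$ codeword to the equations of condition $3)$: eliminating the scalars from $aX+bY+1=0$ and $aX^{v}+bY^{v}+1=0$ yields $(Y+b^{-1})^{v}=c\,(Y^{v}+b^{-1})$ with $c=-(-1)^{v}a^{v-1}b^{1-v}$, and $c$ is pinned to $\{\pm 1\}$ only because $v$ even together with $v\equiv 1\pmod{\tfrac{p-1}{2}}$ forces $a^{v-1},b^{1-v}\in\{\pm1\}$. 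Your condition-$3)$ analysis rules out exactly the values $c=\pm1$ and nothing else, so once you renounce condition $2)$, your weight-$3$ normalization no longer connects to the equations you solved; the claim that ``the divisibility is subsumed by the gcd facts'' is unjustified, since $\gcd(n,p-1)=2$ says nothing about $a^{v-1}$ for $a\in\mathbb{F}_{p}^{*}$. The paper, by contrast, keeps Lemma \ref{Le:2.3} intact and spends the hypothesis $\tfrac{p-1}{2}\mid k$ precisely on checking condition $2)$. Your instinct that this hypothesis is troublesome for $v=p^{k}+1$ is actually well founded (since $p\equiv1\pmod{\tfrac{p-1}{2}}$ one gets $v\equiv 2$, so the paper's own step ``$v\equiv k+1\equiv 1$'' deserves scrutiny when $p>3$), but the cure would be to redo the reduction of \cite{52} for arbitrary $c\in\mathbb{F}_{p}^{*}$ --- your M\"obius matrix keeps trace zero for every $c$, and then your codeword-level exclusion of $-1$ becomes essential rather than optional --- and none of that is in your write-up. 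As it stands, your proof neither verifies condition $2)$ nor replaces the reduction that depends on it, so it does not close.
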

\begin{proof}
It is easy to check that $\gcd(v-1, n)=1$ and $v \notin C_{1}$. Note that $\frac{p-1}{2}\mid k$, $v\equiv k+1\equiv 1 \pmod {\frac{p-1}{2}}$.
 In addition, $\gcd(v, \frac{n}{2})=\gcd(p^{k}+1, \frac{p^{m}-1}{p-1})=1$ since $\gcd(p^{k}+1, p^{m}-1)=2$ when $m$ is odd. Hence, $\gcd(v, n)=2$. Then $|C_{v}|=m$ by Lemma \ref{Le:2.4}.

In the following, we will prove the equations $(x+\alpha)^{v}\pm(x^{v}+\alpha)=0$ have no solutions in $\Pi\setminus\{-1\}$ for any  $\alpha$ in $\mathbb{F}_{p}^{*}$ when $\gcd(m, p-1)=1$.
Suppose that there is a solution $x_{0}$ in $\Pi\setminus\{-1\}$. Then
\begin{equation}\label{Eq:3.8}
(x_{0}+\alpha)^{v}=\pm(x_{0}^{v}+\alpha).
\end{equation}
Taking $(p^{k}-1)$-th power on both sides of (\ref{Eq:3.8}), we have
\[(x_{0}+\alpha)^{p^{2k}-1}=(x_{0}^{p^{k}+1}+\alpha)^{p^{k}-1}
\]
which can be reduced to
\[x_{0}(x_{0}^{p^{2k}-1}-1)(x_{0}^{p^{k}}-1)=0.
\]
Hence, $x_{0}^{p^{2k}-1}=1$ or $x_{0}^{p^{k}}=1$ which is impossible, since $\gcd(p^{k}, n)=1$, $\gcd(p^{2k}-1,n)=2\gcd(p-1,m)=2$ when $\gcd(m, k)=1$ and $\gcd(m, p-1)=1$. By Lemma \ref{Le:2.3}, the result follows.
\end{proof}
\begin{example}
Let $p=5$, $m=3$, $k=2$ .  Then $v=10$ and the code $\mathcal{C} _{1,10}$ is an optimal   cyclic code with parameters $[62,50,4]$ and generator polynomial  \[  x^6 + 3x^5 + 2x^3 + 2x^2 + 4.\]
\end{example}
\noindent\textbf{3.2 The second   class of optimal $p$-ary codes}

In the following, we consider the codes $\mathcal{C}_{p}(1,v)$ with
\begin{equation}\label{Eq:3.1}
(p^{t}-1)v\equiv p^{s}-p^{h}\pmod{p^{m}-1}
\end{equation}
where $t, s, h, m$ are integers such that $0 \leq t, s, h \leq m-1$.
\begin{lemma}\label{Le:3.1}
Let $t, s, h, m$ be integers such that $0 \leq t, s, h \leq m-1$. Then the congruence equation (\ref{Eq:3.1}) has solutions for $v$ such that $C_{v}=m$ if $\gcd(m, t)=\gcd(m, s-h)=1$.
\end{lemma}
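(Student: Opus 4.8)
The plan is to treat (\ref{Eq:3.1}) as a linear congruence in $v$ and to track how each hypothesis controls a relevant greatest common divisor. Set $N=\frac{p^{m}-1}{p-1}=\frac n2$. The two conditions $\gcd(m,t)=1$ and $\gcd(m,s-h)=1$ will be fed into the standard identity $\gcd(p^{i}-1,p^{j}-1)=p^{\gcd(i,j)}-1$ at two different places: the first to guarantee solvability and to pin down the unique residue class of solutions modulo $N$, the second to force the resulting $v$ to be coprime to $N$, which is precisely what is needed to invoke Lemma \ref{Le:2.4}.

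First I would establish solvability. Since $\gcd(m,t)=1$, the identity above gives $\gcd(p^{t}-1,p^{m}-1)=p-1$, and because $p\equiv 1\pmod{p-1}$ we have $p^{s}-p^{h}\equiv 0\pmod{p-1}$, so the divisibility condition for solvability holds. Writing $p^{t}-1=(p-1)a$, $p^{s}-p^{h}=(p-1)b$ and $p^{m}-1=(p-1)N$ and cancelling the factor $p-1$, the congruence (\ref{Eq:3.1}) is equivalent to $a v\equiv b\pmod N$. The same gcd identity, after cancelling $p-1$ on both arguments, yields $\gcd(a,N)=1$, so $a$ is a unit modulo $N$ and there is a unique solution $v^{*}\equiv a^{-1}b\pmod N$, which I take in the range $0\le v^{*}<N<n$.

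The crux — and the step I expect to be the main obstacle — is to show that this $v^{*}$ satisfies $|C_{v^{*}}|=m$. By Lemma \ref{Le:2.4} it suffices to bound $\gcd(v^{*},n)$ by $2(p+1)$, and in fact I will show it is at most $2$. Since $a^{-1}$ is a unit modulo $N$, we have $\gcd(v^{*},N)=\gcd(b,N)$. Writing, say for $s>h$, $b=p^{h}\cdot\frac{p^{s-h}-1}{p-1}$ and noting $\gcd(p^{h},N)=1$ (as $p\nmid N$), this reduces to $\gcd\!\left(\frac{p^{s-h}-1}{p-1},N\right)$; here the hypothesis $\gcd(m,s-h)=1$ together with the gcd identity forces this to equal $1$ (the case $s=h$ is excluded since $\gcd(m,0)=m>1$, and $s<h$ is symmetric). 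Hence $\gcd(v^{*},N)=1$, so the only possible common factor of $v^{*}$ with $n=2N$ is $2$, giving $\gcd(v^{*},n)\in\{1,2\}\le 2(p+1)$. Lemma \ref{Le:2.4} then yields $|C_{v^{*}}|=m$, and $v^{*}$ is the required solution.

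The routine pieces are the two invocations of $\gcd(p^{i}-1,p^{j}-1)=p^{\gcd(i,j)}-1$, including the careful cancellation of $p-1$ needed to pass between $p^{i}-1$ and $\frac{p^{i}-1}{p-1}$. The only genuinely delicate point is keeping track of the extra factor $2$ that distinguishes $n$ from $N$: this is why the bound lands at $2$ rather than $1$, and why the slack up to $2(p+1)$ in Lemma \ref{Le:2.4} is comfortably more than enough.
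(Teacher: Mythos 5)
Your proposal is correct and takes essentially the same route as the paper: both arguments use the identity $\gcd(p^{i}-1,p^{j}-1)=p^{\gcd(i,j)}-1$ to get solvability of (\ref{Eq:3.1}), then show $\gcd(v,n)\in\{1,2\}$ and invoke Lemma \ref{Le:2.4} to conclude $|C_{v}|=m$. The only difference is one of detail: the paper deduces the gcd bound tersely from $\gcd(p^{m}-1,(p^{t}-1)v)=\gcd(p^{m}-1,p^{s}-p^{h})=p-1$, whereas you make the same computation explicit by cancelling $p-1$, inverting $a$ modulo $N=\frac{n}{2}$, and evaluating $\gcd(b,N)$ directly.
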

\begin{proof}
Let $\tau =\gcd(p^{m}-1, p^{t}-1)$ and $r =\gcd(p^{m}-1, p^{s}-p^{h})$. Then $\tau =r =p-1$  since $\gcd(m, t)=\gcd(m, s-h)=1$.  So $\tau \mid (p^{s}-p^{h})$  which implies (\ref{Eq:3.1}) has solutions. Since $r =\gcd(p^{m}-1, (p^{t}-1)v)=p-1$, it follows that $\gcd(v, n)=1$ or $2$. Then by Lemma \ref{Le:2.4}, $C_{v}=m$.
\end{proof}
By Lemma \ref{Le:2.3} and \ref{Le:3.1}, we have the following result.
\begin{theorem}\label{Th:3.1}
Let $t, s, h$ be integers such that $0 \leq t, s, h \leq m-1$ and $\gcd(m, t)=\gcd(m, s-h)=1$ where $m> 2$ is an integer with $\gcd(m, p-1)\mid 2$. Let $v$ be a solution of (\ref{Eq:3.1}). Then $\mathcal{C}_{p}(1,v)$ is optimal with parameters $[n, n-2m, 4]$ if $\gcd(p^{h}-v, n)=\gcd(v-1, n)=1$ and $v\equiv 1 \pmod{\frac{p-1}{2}}$.
\end{theorem}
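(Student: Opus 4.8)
The plan is to verify the three conditions of Lemma \ref{Le:2.3} for the integer $v$, since Lemma \ref{Le:3.1} already guarantees that (\ref{Eq:3.1}) is solvable and that $|C_v|=m$. Conditions $1)$ and $2)$ are hypotheses of the theorem; they also give $v\notin C_1$, because $\gcd(v-1,n)=1$ with $n$ even forces $v$ even, whereas every member of $C_1$ is congruent to some odd $p^i$ modulo the even number $n$ and is therefore odd. A preliminary observation I will use twice is that $\gcd(n,p-1)=2$, hence $\mathbb{F}_p^*\cap\Pi=\{-1\}$: by the Remark, conditions $1)$ and $2)$ force $p\equiv3\pmod4$, so $4\nmid(p-1)$, and a short computation using $\frac{p^m-1}{p-1}\equiv m\pmod{p-1}$ together with $\gcd(m,p-1)\mid2$ gives $\gcd(n,p-1)=2$. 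Since $\mu_n$ (the group of $n$-th roots of unity) and $\mathbb{F}_p^*$ are subgroups of the cyclic group $\mathbb{F}_{p^m}^*$, their intersection is cyclic of order $\gcd(n,p-1)=2$, i.e. $\{1,-1\}$; as $\Pi=\mu_n\setminus\{1\}$, we get $\mathbb{F}_p^*\cap\Pi=\{-1\}$. Everything thus reduces to condition $3)$.

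Now suppose, for contradiction, that some $x_0\in\Pi\setminus\{-1\}$ satisfies $(x_0+\alpha)^v=\epsilon(x_0^v+\alpha)$ with $\epsilon\in\{1,-1\}$ and $\alpha\in\mathbb{F}_p^*$. If $x_0+\alpha=0$ then $x_0=-\alpha\in\mathbb{F}_p^*\cap\Pi=\{-1\}$, contradicting $x_0\neq-1$; the same rules out $x_0^v+\alpha=0$, so both quantities are nonzero. The crucial step is to raise the identity to the $(p^t-1)$-th power: since $p$ is odd, $p^t-1$ is even and $\epsilon$ disappears. Using $x_0^{p^m-1}=1$ and $(p^t-1)v\equiv p^s-p^h\pmod{p^m-1}$ on the left, and the Frobenius identity $(x_0^v+\alpha)^{p^t}=x_0^{vp^t}+\alpha$ (valid because $\alpha\in\mathbb{F}_p$) on the right, I obtain
\begin{equation*}
\frac{x_0^{p^s}+\alpha}{x_0^{p^h}+\alpha}=\frac{x_0^{vp^t}+\alpha}{x_0^v+\alpha}.
\end{equation*}
Cross-multiplying and invoking $vp^t\equiv v+p^s-p^h$, the top terms $x_0^{p^s+v}$ and $x_0^{p^h+vp^t}=x_0^{v+p^s}$ cancel, as do the $\alpha^2$ terms; dividing by $\alpha\neq0$ leaves $x_0^{p^s}+x_0^v=x_0^{p^h}+x_0^{v+p^s-p^h}$, which factors as
\begin{equation*}
\bigl(x_0^{p^h}-x_0^v\bigr)\bigl(x_0^{p^s-p^h}-1\bigr)=0.
\end{equation*}

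It then remains to kill both factors. If $x_0^{p^h}=x_0^v$, then $x_0^{p^h-v}=1$; combined with $x_0^n=1$ (as $x_0\in\Pi$) and the hypothesis $\gcd(p^h-v,n)=1$, the order of $x_0$ divides $1$, forcing $x_0=1\notin\Pi$, a contradiction. If instead $x_0^{p^s-p^h}=1$, then $\gcd(m,s-h)=1$ yields $\gcd(p^s-p^h,p^m-1)=p^{\gcd(s-h,m)}-1=p-1$, whence $x_0^{p-1}=1$ and $x_0\in\mathbb{F}_p^*$; but then $x_0\in\mathbb{F}_p^*\cap\Pi=\{-1\}$, again contradicting $x_0\neq-1$. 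Hence condition $3)$ holds and, by Lemma \ref{Le:2.3}, $\mathcal{C}_p(1,v)$ is optimal with parameters $[n,n-2m,4]$.

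I expect the main obstacle to be the central computation: after taking $(p^t-1)$-th powers one must carefully track the exponents modulo $p^m-1$---using both $(p^t-1)v\equiv p^s-p^h$ and its consequence $vp^t\equiv v+p^s-p^h$---so that the cross-multiplied identity telescopes to the clean factorization $\bigl(x_0^{p^h}-x_0^v\bigr)\bigl(x_0^{p^s-p^h}-1\bigr)=0$. A secondary, more elementary difficulty is the number-theoretic fact $\gcd(n,p-1)=2$, which is precisely what collapses the surviving case $x_0\in\mathbb{F}_p^*$ down to the single excluded root $x_0=-1$.
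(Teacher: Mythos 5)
Your proof is correct and follows essentially the same route as the paper's: verify the hypotheses of Lemma \ref{Le:2.3} (with Lemma \ref{Le:3.1} giving $|C_v|=m$ and the parity argument giving $v\notin C_1$), raise the equation to the $(p^t-1)$-th power, use (\ref{Eq:3.1}) and the Frobenius to reach the factorization $(x_0^{p^h}-x_0^v)(x_0^{p^s-p^h}-1)=0$, and kill both factors via the gcd hypotheses. The only differences are that you make explicit two points the paper leaves implicit --- the non-vanishing of $x_0+\alpha$ and $x_0^v+\alpha$ needed for the "easy calculation", and the fact $\gcd(n,p-1)=2$ (via $p\equiv 3\pmod 4$) underlying the paper's claim $\gcd(p^s-p^h,n)=2$ --- which strengthens rather than changes the argument.
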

\begin{proof}
First, we prove $v \notin C_{1}$. Suppose that $v \in C_{1}$, then $p^{\lambda} \equiv v \pmod{n}$ for some integer $\lambda$. Hence $n \mid (p^{\lambda}- v)$. Then $v$ is odd since $p$ is odd and $n$ is even which contradicts the condition $\gcd(v-1, n)=1$.

Next, we prove the equations $(x+\alpha)^{v}\pm(x^{v}+\alpha)=0$ have no solutions in $\Pi\setminus\{\sigma^{\frac{n}{2}}\}$  for any  $\alpha$ in $\mathbb{F}_{p}^{*}$. Suppose that there is a solution $x_{0} \in \Pi\setminus\{-1\}$. Then
\begin{equation}\label{Eq:3.9}(x_{0}+\alpha)^{v}=\pm(x_{0}^{v}+\alpha).
\end{equation}
Taking $p^{t}-1$-th power on both sides of (\ref{Eq:3.9}), we have
\[(x_{0}+\alpha)^{v(p^{t}-1)}=(x_{0}^{v}+\alpha)^{p^{t}-1}.
\]
 By (\ref{Eq:3.1}), the equation above  can be reduced to
 \[x_{0}^{p^{s}}+x_{0}^{v}=x_{0}^{vp^{t}}+x_{0}^{p^{h}}
\]
by easy calculation.
Note that $p^{t}v \equiv v+p^{s}-p^{h} \pmod{p^{m}-1}$, then the equation above becomes
\[x_{0}^{v}(x_{0}^{p^{s}-p^{h}}-1)(x_{0}^{p^{h}-v}-1)=0.
\]
Hence, $x_{0}^{p^{s}-p^{h}}=1$ or $x_{0}^{p^{h}-v}=1$ which is impossible, since $\gcd(p^{h}-v, n)=1$, $\gcd(p^{s}-p^{h},n)=2$ when $\gcd(m, s-h)=1$ and $\gcd(m, p-1)\mid 2$.
\end{proof}
By Theorem \ref{Th:3.1}, we can have three concrete constructions as follows.
\begin{corollary}\label{Cor:3.1}
Let $m> 2$ be an odd integer such that $\gcd(m, p-1)=1$. Let $t=1, s=0, h=1$ and $v=\frac{n}{2}-1$ be a solution of (\ref{Eq:3.1}). Then $\mathcal{C}_{p}(1,v)$ is optimal with parameters $[n, n-2m, 4]$ if $\frac{p-1}{2} \mid (m-2)$.
\end{corollary}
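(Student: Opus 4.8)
The plan is to check that the specific data $t=1$, $s=0$, $h=1$, $v=\frac{n}{2}-1$ satisfies every hypothesis of Theorem \ref{Th:3.1}, and then simply invoke that theorem. First I would confirm that $v=\frac{n}{2}-1$ really is a solution of (\ref{Eq:3.1}): with these parameters the congruence reads $(p-1)v\equiv 1-p\pmod{p^{m}-1}$, and since $(p-1)\left(\frac{n}{2}-1\right)=(p^{m}-1)-(p-1)\equiv -(p-1)=1-p$, this holds. At the same time the conditions $\gcd(m,t)=\gcd(m,1)=1$ and $\gcd(m,s-h)=\gcd(m,-1)=1$ are automatic, and $\gcd(m,p-1)\mid 2$ follows from the standing hypothesis $\gcd(m,p-1)=1$.

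The bulk of the work is verifying the two coprimality conditions $\gcd(v-1,n)=1$ and $\gcd(p^{h}-v,n)=\gcd(p-v,n)=1$. Writing $N:=\frac{n}{2}=\frac{p^{m}-1}{p-1}=1+p+\cdots+p^{m-1}$, I note $N\equiv m\pmod 2$, so $N$ is \emph{odd} because $m$ is odd. For the first, $v-1=N-2$ and $n=2N$ give $\gcd(N-2,2N)=\gcd(N-2,4)$ (since $2N-2(N-2)=4$), and this equals $1$ as $N-2$ is odd. For the second, I would compute $v-p=p^{2}+p^{3}+\cdots+p^{m-1}=p^{2}\cdot\frac{p^{m-2}-1}{p-1}$; because $\gcd(p,n)=1$ the factor $p^{2}$ is irrelevant, and using $\gcd(p^{m-2}-1,p^{m}-1)=p^{\gcd(m-2,m)}-1=p-1$ (here $m$ odd forces $\gcd(m-2,m)=\gcd(m-2,2)=1$) one deduces $\gcd\!\left(\frac{p^{m-2}-1}{p-1},\frac{p^{m}-1}{p-1}\right)=1$. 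The remaining factor $2$ in $n$ is killed since $\frac{p^{m-2}-1}{p-1}=1+p+\cdots+p^{m-3}$ has $m-2$ (odd) terms and is therefore odd, whence $\gcd(p-v,n)=1$.

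Finally I would handle the congruence $v\equiv 1\pmod{\frac{p-1}{2}}$. Reducing $N=1+p+\cdots+p^{m-1}$ modulo $\frac{p-1}{2}$, each $p\equiv 1$, so $N\equiv m$ and $v=N-1\equiv m-1\pmod{\frac{p-1}{2}}$; hence $v\equiv 1$ precisely when $\frac{p-1}{2}\mid(m-2)$, which is exactly the hypothesis of the corollary. With all conditions of Theorem \ref{Th:3.1} now verified, the optimality and parameters $[n,n-2m,4]$ follow immediately. The main obstacle I anticipate is the evaluation $\gcd(p-v,n)=1$: one must apply $\gcd(p^{a}-1,p^{b}-1)=p^{\gcd(a,b)}-1$ correctly after dividing through by $p-1$, and then carefully isolate and cancel the leftover factor of $2$ using the parity of the geometric sum, with the oddness of $m$ entering at each stage.
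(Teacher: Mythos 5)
Your proposal is correct and takes essentially the same route as the paper: check each hypothesis of Theorem \ref{Th:3.1} for the data $t=1$, $s=0$, $h=1$, $v=\frac{n}{2}-1$ and then invoke it, with the congruence condition reducing in both cases to $v-1\equiv m-2\pmod{\frac{p-1}{2}}$. The only difference is in one gcd computation—where you factor $v-p=p^{2}\cdot\frac{p^{m-2}-1}{p-1}$ and apply $\gcd(p^{a}-1,p^{b}-1)=p^{\gcd(a,b)}-1$, the paper reduces modulo $\frac{n}{2}$ to get $\gcd(p^{h}-v,n)=\gcd(p+1,\frac{n}{2})=1$—and your write-up is in fact more complete, since you also verify that $v$ solves (\ref{Eq:3.1}), which the paper assumes without checking.
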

\begin{proof}
It is clear that $\gcd(m, t)=\gcd(m, s-h)=1$.
$\gcd(p^{h}-v, n)=\gcd(p+1-\frac{n}{2}, n)=\gcd(p+1, \frac{n}{2})=1$ since $m$ is odd. Note that $\gcd(v-1, n)=\gcd(\frac{n}{2}-2, n)=1$.
Furthermore, $v-1 \equiv m-2 \pmod{\frac{p-1}{2}}$. Then $v \equiv 1\pmod{\frac{p-1}{2}}$ since $\frac{p-1}{2} \mid (m-2)$. By Theorem \ref{Th:3.1}, the result follows.
\end{proof}
\remark If $p=3$ in Corollary \ref{Cor:3.1}, then the result reduced to $\mathcal{C}_{3}(1,\frac{3^{m}-3}{2})$ is optimal if $m \geq 3$ is odd which generalizes a result in \cite{6}.
\begin{example}
Let $p=7$, $m=5$.  Then $v=2800$ and the code $\mathcal{C} _{1,2800}$ is an optimal  cyclic code with parameters $[5602,5592,4]$ and generator polynomial  \[ x^{10} + 6x^9 + 2x^8 + 2x^7 + 6x^6 + 3x^5 + x^4 + 2x^3 + 5x^2 + 6x + 6.\]
\end{example}
\begin{corollary}\label{Cor:3.2}
Let $m> 2$ be  an odd integer such that $\gcd(m, p-1)=1$. Let $t=1, s\geq 2, h=0$ and $v=\frac{p^{s}-1}{p-1}$ be a solution of (\ref{Eq:3.1}). Then $\mathcal{C}_{p}(1,v)$ is optimal with parameters $[n, n-2m, 4]$ if $s$ is even, $\gcd(m, s)=\gcd(m, s-1)=1$ and $\frac{p-1}{2} \mid (s-1)$.
\end{corollary}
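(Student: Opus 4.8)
The plan is to obtain this as a direct application of Theorem \ref{Th:3.1}: I would take $t=1$, $h=0$, $s\ge 2$ and $v=\frac{p^{s}-1}{p-1}$, and then verify each hypothesis of that theorem in turn. First, that $v$ really is a solution of (\ref{Eq:3.1}) is immediate, since with $t=1$ and $h=0$ the congruence reads $(p-1)v\equiv p^{s}-1\pmod{p^{m}-1}$, and $(p-1)\cdot\frac{p^{s}-1}{p-1}=p^{s}-1$ holds as an exact identity (and $2\le s\le m-1$ keeps $v$ in range). The structural coprimality conditions are then automatic or given: $\gcd(m,t)=\gcd(m,1)=1$; $\gcd(m,s-h)=\gcd(m,s)=1$ is a hypothesis of the corollary; and $\gcd(m,p-1)=1$ trivially divides $2$.

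The only substantive step is the condition $\gcd(p^{h}-v,n)=\gcd(v-1,n)=1$. Since $h=0$, we have $p^{h}-v=1-v$, so both gcd's equal $\gcd(v-1,n)$ and a single computation suffices. I would write $v-1=\frac{p^{s}-p}{p-1}=p\cdot\frac{p^{s-1}-1}{p-1}$; as $\gcd(p,n)=1$, this reduces the task to showing $\gcd\!\left(\frac{p^{s-1}-1}{p-1},\,n\right)=1$, where $n=2\cdot\frac{p^{m}-1}{p-1}$. This is exactly where the two remaining hypotheses are consumed. The factor $2$ is handled by the assumption that $s$ is even: then $\frac{p^{s-1}-1}{p-1}=1+p+\cdots+p^{s-2}$ is a sum of an odd number $s-1$ of odd terms, hence odd, and so coprime to $2$. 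The odd part reduces to $\gcd\!\left(\frac{p^{s-1}-1}{p-1},\,\frac{p^{m}-1}{p-1}\right)$, which equals $\frac{p^{\gcd(s-1,m)}-1}{p-1}=1$ by the repunit gcd identity together with $\gcd(m,s-1)=1$. (Alternatively, if $d$ denotes this gcd, then $d\mid p-1$ via the standard identity $\gcd(p^{a}-1,p^{b}-1)=p^{\gcd(a,b)}-1$ already used in Lemma \ref{Le:3.1}; reducing the two repunits modulo $d$ then gives $d\mid s-1$ and $d\mid m$, so $d\mid\gcd(m,s-1)=1$.)

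Finally I would check $v\equiv 1\pmod{\frac{p-1}{2}}$. Because $\frac{p-1}{2}\mid p-1$ we have $p\equiv 1\pmod{\frac{p-1}{2}}$, so $v=\sum_{i=0}^{s-1}p^{i}\equiv s\pmod{\frac{p-1}{2}}$; the hypothesis $\frac{p-1}{2}\mid(s-1)$ says precisely that $s\equiv 1$, giving $v\equiv 1$. With all hypotheses of Theorem \ref{Th:3.1} met, the optimality of $\mathcal{C}_{p}(1,v)$ with parameters $[n,n-2m,4]$ follows at once. The main obstacle is the gcd computation of the middle paragraph: one must carefully separate the coprime factor $p$, the factor $2$ of $n$ (which is what forces the use of ``$s$ even''), and the odd repunit factor (which uses $\gcd(m,s-1)=1$); once these are disentangled the remainder is routine substitution into Theorem \ref{Th:3.1}.
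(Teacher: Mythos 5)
Your proposal is correct and follows exactly the paper's route: a direct verification of the hypotheses of Theorem \ref{Th:3.1}, reducing $\gcd(p^{h}-v,n)=\gcd(v-1,n)$ to $\gcd\bigl(\frac{p^{s-1}-1}{p-1},n\bigr)=1$ and checking $v\equiv 1\pmod{\frac{p-1}{2}}$ via $v\equiv s$. The only difference is that you spell out the details the paper merely asserts (the parity argument from ``$s$ even'' and the repunit gcd identity from $\gcd(m,s-1)=1$), which makes your write-up strictly more complete than the published proof.
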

\begin{proof}
It is clear that $\gcd(m, t)=\gcd(m, s-h)=1$  since $\gcd(m, s)=1$.
$\gcd(p^{h}-v, n)=\gcd(v-1, n)=\gcd(\frac{p^{s}-p}{p-1}, n)=\gcd(\frac{p^{s-1}-1}{p-1}, n)=1$ due to $s$ is even and $\gcd(m, s-1)=1$.
Furthermore, $v-1 \equiv s-1 \pmod{\frac{p-1}{2}}$. Note that $\frac{p-1}{2} \mid (s-1)$, then $v \equiv 1\pmod{\frac{p-1}{2}}$. By Theorem \ref{Th:3.1}, the result follows.
\end{proof}
\begin{example}
Let $p=7$, $m=5$, $s=4$.  Then $v=400$ and the code $\mathcal{C} _{1,400}$ is  optimal  with parameters $[5602,5592,4]$ and generator polynomial  \[  x^{10} + 6x^9 + 2x^8 + 2x^7 + 6x^6 + 3x^5 + x^4 + 2x^3 + 5x^2 + 6x + 6.\]
\end{example}
\begin{corollary}\label{Cor:3.3}
Let $m> 2$ be an odd integer such that $\gcd(m, p-1)=1$. Let $t=1, s\geq 2, h=0$ and $v=\frac{n}{2}+\frac{p^{s}-1}{p-1}$ be a solution of (\ref{Eq:3.1}). Then $\mathcal{C}_{p}(1,v)$ is optimal with parameters $[n, n-2m, 4]$ if $s$ is odd, $\gcd(m, s)=\gcd(m, s-1)=1$ and $\frac{p-1}{2} \mid (m+s-1)$.
\end{corollary}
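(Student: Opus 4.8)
The plan is to apply Theorem \ref{Th:3.1} by verifying each of its hypotheses for the present choice $t=1$, $s\geq 2$, $h=0$, and $v=\frac{n}{2}+\frac{p^{s}-1}{p-1}$, closely paralleling the proof of Corollary \ref{Cor:3.2}. The conditions $\gcd(m,t)=\gcd(m,1)=1$ and $\gcd(m,s-h)=\gcd(m,s)=1$ are immediate, and $\gcd(m,p-1)=1$ yields $\gcd(m,p-1)\mid 2$. First I would confirm that $v$ really solves (\ref{Eq:3.1}): since $(p-1)\frac{n}{2}=p^{m}-1$, one has $(p^{t}-1)v=(p-1)v=(p^{m}-1)+(p^{s}-1)\equiv p^{s}-1=p^{s}-p^{h}\pmod{p^{m}-1}$, so $v$ is a solution and $|C_{v}|=m$ follows from Lemma \ref{Le:3.1}.

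Next I would verify the coprimality condition $\gcd(p^{h}-v,n)=\gcd(v-1,n)=1$, noting that these two gcd's coincide because $h=0$. The key preliminary observation is a parity count: writing $\frac{n}{2}=1+p+\cdots+p^{m-1}$ and $\frac{p^{s}-1}{p-1}=1+p+\cdots+p^{s-1}$ as sums of odd terms, both are odd because $m$ and $s$ are odd, so $v$ is even and $v-1$ is odd. Hence the factor $2$ in $n=2\cdot\frac{n}{2}$ is irrelevant and $\gcd(v-1,n)=\gcd(v-1,\frac{n}{2})$. Reducing modulo $\frac{n}{2}=\frac{p^{m}-1}{p-1}$ gives $v-1\equiv \frac{p^{s}-1}{p-1}-1=p\cdot\frac{p^{s-1}-1}{p-1}\pmod{\frac{n}{2}}$, and since $\gcd(p,\frac{n}{2})=1$ this reduces the problem to $\gcd(\frac{p^{s-1}-1}{p-1},\frac{p^{m}-1}{p-1})$, which equals $1$ exactly as in Corollary \ref{Cor:3.2} because $\gcd(m,s-1)=1$.

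Finally I would check $v\equiv 1\pmod{\frac{p-1}{2}}$. Since $\frac{p-1}{2}\mid(p-1)\mid(p^{i}-1)$, every $p^{i}\equiv 1\pmod{\frac{p-1}{2}}$, so $\frac{n}{2}\equiv m$ and $\frac{p^{s}-1}{p-1}\equiv s$ modulo $\frac{p-1}{2}$, whence $v-1\equiv m+s-1\pmod{\frac{p-1}{2}}$. The hypothesis $\frac{p-1}{2}\mid(m+s-1)$ then gives $v\equiv 1\pmod{\frac{p-1}{2}}$, and with all hypotheses of Theorem \ref{Th:3.1} confirmed, the optimality of $\mathcal{C}_{p}(1,v)$ with parameters $[n,n-2m,4]$ follows.

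I do not anticipate a serious obstacle: the argument is a routine verification mirroring Corollary \ref{Cor:3.2}, the only genuinely new feature being the extra summand $\frac{n}{2}$ in $v$, which is absorbed by first showing $v$ is even so the factor $2$ drops out and then reducing modulo $\frac{n}{2}$, where that summand vanishes. The one step needing a little care is the coprimality $\gcd(\frac{p^{s-1}-1}{p-1},\frac{p^{m}-1}{p-1})=1$, which rests on the standard identity $\gcd(p^{a}-1,p^{b}-1)=p^{\gcd(a,b)}-1$ combined with $\gcd(m,s-1)=1$.
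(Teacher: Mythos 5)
Your proposal is correct and follows essentially the same route as the paper: verifying the hypotheses of Theorem \ref{Th:3.1}, using the oddness of $m$ and $s$ to drop the factor $2$ and the summand $\frac{n}{2}$, reducing $\gcd(v-1,n)$ to $\gcd\bigl(\frac{p^{s-1}-1}{p-1},\frac{p^{m}-1}{p-1}\bigr)=1$ via $\gcd(m,s-1)=1$, and computing $v-1\equiv m+s-1\pmod{\frac{p-1}{2}}$. Your write-up is in fact somewhat more careful than the paper's (you explicitly check that $v$ solves (\ref{Eq:3.1}) and spell out the parity and gcd steps that the paper leaves implicit), but the underlying argument is identical.
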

\begin{proof}
It is clear that $\gcd(m, t)=\gcd(m, s-h)=1$  since $\gcd(m, s)=1$.
 Since $m$ and $s$ are both odd, $\gcd(p^{h}-v, n)=\gcd(v-1, n)=\gcd(\frac{n}{2}+\frac{p^{s}-1}{p-1}-1, n)=\gcd(\frac{n}{2}+\frac{p^{s}-1}{p-1}-1, \frac{n}{2})=\gcd(\frac{p^{s}-1}{p-1}-1, \frac{n}{2})$. Note that $\gcd(m, s-1)=1$, then $\gcd(\frac{p^{s}-1}{p-1}-1, \frac{n}{2})=\gcd(\frac{p^{s-1}-1}{p-1}-1, \frac{p^{m}-1}{p-1})=1$.
Furthermore, $v-1 \equiv m+s-1 \pmod{\frac{p-1}{2}}$. Then $v \equiv 1\pmod{\frac{p-1}{2}}$ due to $\frac{p-1}{2} \mid m+s-1$. By Theorem \ref{Th:3.1}, the result follows.
\end{proof}
\begin{example}
Let $p=3$, $m=7$, $s=3$.  Then $e=86$ and the code $\mathcal{C} _{1,86}$ is   optimal with parameters $[728,714,4]$ and generator polynomial  \[  x^{14} + x^{13} + 2x^{10} + x^9 + 2x^8 + x^6 + x^5 + x^4 + 2x^3 + 2.\]
\end{example}
\section{Conclusions}
In this paper, we give a general construction of optimal $p$-ary cyclic codes which leads to three explicit constructions. In addition, another class of  optimal cyclic codes $\mathcal{C}_{p}(1,v)$ with $v=p^{k}+1$ are presented.


\begin{thebibliography}{99}
\bibitem{51}
{P. Charpin, A. Tiet\" av\" ainen, V. Zinoviev}, ¡°On  binary cyclic codes with distance three¡±,
{\it Probl. Inf. Transm.},  {\bf 33}~(1997), 3-14.

\bibitem{1}
{P. Charpin, A. Tiet\"av\"ainen, V. Zinoviev}, ¡°On the minimum distances of non-binary cyclic codes¡±,
{\it Des. Codes Crypogr.},  {\bf 17}~(1999), 81-85.

\bibitem{4}
{C. Carlet, C. Ding, J. Yuan}, ¡°Linear codes from highly nonlinear functions and their secret sharing schemes¡±,
{\it IEEE Trans. Inf. Theory},  {\bf 51, no. 6}~(2005), 2089-2102.

\bibitem{6}
{C. Ding, T. Helleseth}, ¡°Optimal ternary cyclic codes from monomials¡±,
{\it IEEE Trans. Inf. Theory},  {\bf 59, no. 9}~(2013), 5898-5904.
\bibitem{5}
{C. Ding, S. Ling}, ¡°A $q$-polynomial approach to  cyclic codes¡±,
{\it Finite Fields Appl.},  {\bf 20}~(2013), 1-14.

\bibitem{15}
{C. Ding, C. Li, N. Li, Z. Zhou}, ¡°Three-weight cyclic codes and their weight distributions¡±,
{\it Discrete Math.},  {\bf 339, no. 2}~(2016), 415-427.


\bibitem{7}
{C. Ding, Y. Liu, C. Ma, L. Zeng}, ¡°The weight distributions of the duals of cyclic codes with two zeros¡±,
{\it IEEE Trans. Inf. Theory},  {\bf 57, no. 12}~(2011), 8000-8006.
\bibitem{8}
{C. Fan, N. li, Z. Zhou}, ¡°A class of optimal ternary cyclic codes and their duals¡±,
{\it Finite Fields Appl.},  {\bf 37}~(2016), 193-202.
\bibitem{9}
{W. Fang, J. Wen, F. Fu}, ¡°A $q$-polynomial approach to constacyclic codes¡±,
{\it Finite Fields Appl.},  {\bf 47}~(2017), 161-182.
\bibitem{10}
{T. Feng}, ¡°On cyclic codes of length $2^{2^{r}}-1$ with two zeros whose dual codes have three weights¡±,
{\it Des. Codes Crypogr.},  {\bf 62}~(2012), 253-258.
\bibitem{11}
{W. Huffman, V. Pless}, ¡°Foundamentals of error-correcting codes¡±,
{\it Cambridge University Press},  (2003).

\bibitem{31}
{C. Li, Q. Yue}, ¡°Weight distributions of two classes of cyclic
codes with respect to two distinct
order elements¡°, {\it IEEE Trans. Inf. Theory}, {\bf 60, no. 1}~(2014), 296-303.



\bibitem{32}
{N. Li, C. Li, T. Helleseth, C. Ding, X. Tang}, ¡°Optimal ternary cyclic codes with minimun distance four and five¡±, {\it Finite Fields Appl.}, {\bf 30}~(2014), 100-120.
\bibitem{23}
N. Li, Z. Zhou, T. Helleseth, ¡°On a conjecture about a class of optimal ternary cyclic codes¡±,
Signal Design and its Applications in Communications (IWSDA), 2015 seventh Intertional
Workshop on Signal, DOI:10.1109/IWSDA.2015.7458415.

\bibitem{52}
{ D. Liao, X. Kai, S. Zhu, P. Li }, ¡°A class of optimal cyclic codes with two zeros¡±,
{\it IEEE Commun.Lett.},   DOI: 10.1109/LCOMM.2019.2921330.




\bibitem{22}
{J. Luo, K. Feng}, ¡°Cyclic codes and sequences form generalized Coulter-Matthews function¡±,
{\it IEEE Trans. Inf. Theory},  {\bf 54, no. 12}~(2008), 5345-5353.

\bibitem{13}
{C. Ma, L. Zeng, Y. Liu, D. Feng, C. Ding}, ¡°The weight enumerator of a class of cyclic codes¡±,
{\it IEEE Trans. Inf. Theory},  {\bf 57, no. 1}~(2011), 397-402.
\bibitem{43}
{M. Shi, P. Sol\'e}, ¡°Optimal $p$-ary codes from one-weight and two-weight
codes over $\mathbb{F}_{p}+v\mathbb{F}_{p}^{*}$¡±,
{\it J Syst Sci Complex},  {\bf 28, no. 3}~(2015), 679-690.
\bibitem{35}
{L. Wang, G. Wu}, ¡°Several classes of optimal ternary cyclic codes with minimal distance
four¡±, {\it Finite Fields Appl.},  {\bf 40}~(2016), 126-137.

\bibitem{14}
{M. Xiong, N. Li}, ¡°Optimal cyclic codes with generalized Niho-type zeros and the weight distribution¡±,
{\it IEEE Trans. Inf. Theory},  {\bf 61, no. 9}~(2015), 4914-4922.

\bibitem{41}
{G. Xu, X. Cao, S. Xu}, ¡°Optimal $p$-ary cyclic codes with minimum
distance four from monomials¡±, {\it Cryptography and Communications},  {\bf 8, no. 4}~(2016), 541-554.

\bibitem{3}
{H. Yan, Z. Zhou, X. Du}, ¡°A family of optimal ternary cyclic codes from the Niho-type exponent¡±,
{\it Finite Fields Appl.},  {\bf 54}~(2018), 101-112.

\bibitem{45}
{X. Zeng, L. Hu, W. Jiang}, ¡°The weight distribution of a class of $p$-ary cyclic codes¡±, {\it Finite Fields Appl.},  {\bf  16, no. 1}~(2010), 126-137.

\bibitem{46}
{Y. Zhou, X. Kai, S. Zhu, J. Li}, ¡°On the minimum distance of negacyclic codes with two zeros¡±, {\it Finite Fields Appl.},  {\bf  55}~(2019), 134-150.
\bibitem{19}
{Z. Zhou, C. Ding}, ¡°A class of three-weight cyclic codes¡±,
{\it Finite Fields Appl.},  {\bf 25}~(2014), 79-93.










\end{thebibliography}
\end{document}